\newtheorem{theorem}{Theorem}[section]
\begin{document}

\title{Why Generate When You Can Transform? Unleashing Generative Attention for Dynamic Recommendation}

\author{Yuli Liu}
\affiliation{%
  \institution{Quan Cheng Laboratory, Jinan \\
  Qinghai University, Xining}
  \country{China}
  }
\email{liuyuli012@gmail.com}

\author{Wenjun Kong}
\affiliation{%
  \institution{School of Computer Technology and Application \\ Qinghai University, Xining}
  \country{China}
  }
\email{wenjunkong6@gmail.com}

\author{Cheng Luo}
\affiliation{%
  \institution{Quan Cheng Laboratory, Jinan \\
  MegaTech.AI, Beijing}
  \country{China}
  }
\email{luochengleo@gmail.com}

\author{Weizhi Ma}
\authornote{Corresponding author.}
\affiliation{%
  \institution{AIR, Tsinghua University, Beijing \\ 
  Quan Cheng Laboratory, Jinan}
  \country{China}
  }
\email{mawz@tsinghua.edu.cn}


\renewcommand{\shortauthors}{Liu et al.}

\begin{abstract}
Sequential Recommendation (SR) focuses on personalizing user experiences by predicting future preferences based on historical interactions. Transformer models, with their attention mechanisms, have become the dominant architecture in SR tasks due to their ability to capture dependencies in user behavior sequences. However, traditional attention mechanisms, where attention weights are computed through query-key transformations, are inherently linear and deterministic. This fixed approach limits their ability to account for the dynamic and non-linear nature of user preferences, leading to challenges in capturing evolving interests and subtle behavioral patterns. Given that generative models excel at capturing non-linearity and probabilistic variability, we argue that generating attention distributions offers a more flexible and expressive alternative compared to traditional attention mechanisms. To support this claim, we present a theoretical proof demonstrating that generative attention mechanisms offer greater expressiveness and stochasticity than traditional deterministic approaches. Building upon this theoretical foundation, we introduce two generative attention models for SR, each grounded in the principles of Variational Autoencoders (VAE) and Diffusion Models (DMs), respectively. These models are designed specifically to generate adaptive attention distributions that better align with variable user preferences. Extensive experiments on real-world datasets show our models significantly outperform state-of-the-art in both accuracy and diversity.
\end{abstract}

\begin{CCSXML}
<ccs2012>
   <concept>
       <concept_id>10002951.10003317.10003331.10003271</concept_id>
       <concept_desc>Information systems~Personalization</concept_desc>
       <concept_significance>500</concept_significance>
       </concept>
 </ccs2012>
\end{CCSXML}

\ccsdesc[500]{Information systems~Personalization}

\keywords{Generative Models, Transformer,  Recommendation}


\maketitle

\vspace{-0.6mm}
\section{INTRODUCTION}
\label{sec:intro}
\vspace{-0.8mm}

Sequential Recommendation (SR) is a critical task in many modern applications \cite{ liu2016context, wang2023sequential} with the goal of predicting the next item a user may interact with based on their historical sequence of interactions. 
Common techniques for SR include the earlier Markov models \cite{he2016fusing, cai2017spmc}, matrix factorization-based methods \cite{ khan2025convseq, zheng2020modeling}, convolutional neural networks (CNNs) \cite{zhou2020cnn, tang2018personalized, yan2019cosrec}, recurrent neural networks (RNNs) \cite{xu2019recurrent, cui2018mv, liu2025pone}, graph neural networks (GNNs) \cite{yu2022element, liu2024selfgnn, zhang2022dynamic}, and more recently, Transformer-based models \cite{ li2025dgt, chen2025dualcfgl, zivic2024scaling}. Among these, Transformer architectures \cite{vaswani2017attention}, with their attention mechanisms \cite{ sun2019bert4rec, kang2018self, ge2025personalized}, have become the dominant architecture in SR due to their ability to effectively capture dependencies within long and complex sequences of user behavior.

However, the uncertainty in user behavior and the complexity of behavioral patterns \cite{cheng2025sequential, han2024efficient}, along with the nature of SR tasks, \textit{i.e.}, dynamic and evolving user preferences \cite{chen2021modeling, boka2024survey}, presents distinct challenges.
Traditional attention mechanisms, which primarily rely on query-key transformations, compute attention scores linearly through a dot product \cite{tang2021probabilistic, noci2024shaped}. This process is deterministic, meaning that for a given set of queries and keys, the resulting attention weights are static and consistently calculated \cite{lin2022structured, martin2020monte}.
This fixed approach limits the model's expressiveness for capturing complex patterns and stochasticity for adapting to dynamic user preferences. 
As a result, traditional attention mechanisms' ability to adapt to real-world SR environments is reduced.
Although some works have attempted to improve the expressiveness of attention mechanisms by modifying the computation form \cite{liu2024probabilistic, tian2024eulerformer, zhou2024theoretical} or introducing probabilistic representations to incorporate stochasticity \cite{fan2022sequential, martin2020monte, tang2021probabilistic}, they still remain largely dependent on linear transformations and attention scores with a relatively fixed computation formula.
Consequently, Transformer-based SR models still have not undergone a fundamental change and continue to face challenges in integrating stochasticity while enhancing their expressiveness and adaptability.

Given these challenges, the inherent advantages of generative models (\textit{e.g.}, their ability to handle uncertainty and capture complex, non-linear dependencies \cite{yoon2017semi, bond2021deep}) highlight their potential as a promising alternative for overcoming the limitations of existing attention mechanisms. 
Unlike deterministic linear transformations, generative models can learn to represent intricate patterns and uncertainties directly, enabling more adaptive and expressive computations.
Building upon these advantages, we propose a novel perspective: leveraging \underline{Gen}erative models to directly generate \underline{Att}ention weight distributions (GenAtt) for SR.
This perspective fundamentally shifts away from the reliance on traditional fixed computation formulas and static representations, opening up new possibilities for more flexible and expressive framework that addresses the limitations of existing Transformer-based SR models.

To advance the GenAtt perspective, we first provide a theoretical demonstration highlighting the advantages of generative attention distributions over traditional deterministic attention mechanisms, particularly in terms of their ability to integrate stochasticity and enhance expressiveness. This theoretical foundation establishes that generative models offer richer, more dynamic representations of user behavior, effectively addressing the inherent variability and uncertainty in SR tasks.
Building on this foundation, we propose two distinct generative attention models tailored to sequential recommendation, each leveraging the unique strengths of Variational Autoencoders (VAEs) \cite{kingma2013auto} and Diffusion Models (DMs) \cite{ho2020denoising}. 
VAEs and DMs are selected due to their widespread use in generative tasks and their capacity to model latent variables in a probabilistic manner, which aligns well with the objective of dynamically learning attention distributions.
The VAE-based model learns compact probabilistic representations to address uncertainty and variability in user behavior \cite{correia2023continuous, liang2018variational}, enabling it to generalize across diverse interactions and reveal latent patterns. 
The DM-based model leverages its iterative refinement process to generate adaptive attention distributions \cite{wu2019neural, zhao2024denoising, wang2023better}. Its ability to progressively model complex, non-linear dependencies and inherent noise makes it  suited for capturing fine-grained temporal dynamics.
The integration of theoretical analysis with experimental validation ensures a comprehensive exploration of GenAtt mechanisms. Our approach not only demonstrates the theoretical potential of generative attention in improving expressiveness and stochasticity but also validates its effectiveness in real-world SR scenarios. 
By modeling stochastic latent representations and generating adaptive attention distributions, GenAtt improves recommendation relevance while simultaneously enhancing diversity. The key contributions of our study include:
\vspace{-0.6mm}
\begin{itemize}[leftmargin=*]
\vspace{-0.5mm}
    \item We approach sequential recommendation from a generative perspective, utilizing generative models to directly generate attention weight distributions, which uses an unsupervised latent distribution learning process to replace traditional fixed transformations. By thoroughly exploring this perspective, it demonstrates how a more flexible and adaptive attention mechanism can effectively capture complex behavior patterns and dynamic user preferences, as opposed to relying on static or deterministic computation methods.
    
    \item We provide a theoretical proof that demonstrates the advantages of generative attention distributions, emphasizing their enhanced expressiveness and ability to model uncertainty. This theoretical foundation supports the effectiveness of generative models in improving the performance of sequential recommendation.
    \item We propose two models, each based on distinct generative mechanisms, \textit{i.e.}, VAEs and DMs. These models introduce entirely new forms of attention mechanisms, moving away from fixed computation formulas and linear transformations typically used in traditional approaches. By seamlessly integrating the unique strengths of VAEs and DMs with Transformer, both expressiveness and robustness of GenAtt-based models are enhanced.
    \item Our extensive experiments confirm the theoretical insights, showing that our approaches not only improve recommendation relevance but also promote diversity. 
\end{itemize}

\vspace{-1mm}
\section{RELATED WORK}
\vspace{-0.5mm}
\label{sec:related-work}
\textbf{Sequential Recommendation Models.}
Early approaches are based on Markov models \cite{he2016fusing, cai2017spmc}, which assumed that the next action only depends on a limited number of previous actions \cite{xu2024online}, making them effective but overly simplistic for capturing long-term dependencies. With the advent of deep learning, models such as CNNs are introduced \cite{zhou2020cnn, yan2019cosrec}, and the seminal model \cite{tang2018personalized} leverages their ability to extract local patterns within interaction sequences. Recurrent neural networks \cite{graves2012long} further improve SR by capturing temporal dependencies over longer sequences \cite{xu2019recurrent, cui2018mv, liu2025pone}, offering a more robust understanding of user behavior. GNNs extend these capabilities by modeling the relationships between items and users \cite{yu2022element, liu2024selfgnn, zhang2022dynamic}, enabling richer representations and often being combined with sequential models to enhance recommendation performance. In recent years, Transformer architectures have revolutionized SR \cite{sun2019bert4rec, ge2025personalized}, with their ability to model long-range dependencies and complex user behavior patterns through self-attention mechanisms \cite{kang2018self, li2020time, liu2024pay, liu2024probabilistic}. \\
\textbf{Generative Models for SR.}
Unlike discriminative models \cite{chen2022elecrec, liu2022cdarl}, which focus on directly learning mappings between input features and target labels, generative models aim to model the underlying data distribution \cite{deldjoo2024review}, enabling them to generate new samples and capture uncertainty in a probabilistic manner. In the context of SR, generative models are often employed to address challenges such as modeling inherent uncertainties in user behavior and enhancing the diversity of recommendations. For instance, IRGAN \cite{wang2017irgan} and SRecGAN \cite{lu2021srecgan} utilize adversarial training to generate user-item interactions and next items, respectively. 
VAE-based approaches \cite{xie2021adversarial, shenbin2020recvae} leverage latent probabilistic representations to better capture user preferences. 
Diffusion models have also emerged as promising tools for refining representations \cite{li2023diffurec, du2023sequential} or generate discrete items \cite{wu2019neural, jiang2024diffkg} through iterative denoising processes. 
Some works \cite{wu2019pd, liu2023generative, liu2025generative} have utilized the stochastic characteristics of generative models to enhance recommendation diversity.
\\
\textbf{Comparisons of Attention Mechanisms in SR.}
Attention mechanisms of Transformer have played a pivotal role in advancing SR models. Seminal work SASRec \cite{kang2018self} lay the foundation by introducing self-attention to effectively model interaction sequences. Building on this, probabilistic attention mechanisms \cite{liu2024probabilistic},  STOSA \cite{fan2022sequential}, and denoising attention \cite{ho2020denoising} have explored incorporating stochastic properties into attention. Beyond self-attention, additional information has been incorporated to enrich attention modeling, including contextual information \cite{yuan2020attention, huang2018csan}, temporal signals \cite{li2020time,  tran2023attention}, and attributes \cite{rashed2022context, zhang2019feature}.
Structural modifications have also emerged, with hierarchical attention \cite{ying2018sequential, liu2024pay}, memory modules \cite{tan2021dynamic} modification, and frequency-based attention mechanisms \cite{du2023frequency} extending the flexibility of these models. In addition, generative processes have also been combined, such as in VSAN \cite{zhao2021variational}, which models vectors as probability densities via variational inference. Other works have incorporated adversarial learning during Transformer training to generate next-item predictions \cite{ren2020sequential}, while some frameworks use self-attention as the encoder for generative models \cite{du2023idnp, liu2024mmgrec} or as approximators in hybrid architectures \cite{zolghadr2024generative, li2023diffurec}.


\vspace{-1.4mm}
\section{COMPARISONS AND DISCUSSIONS}
\label{sec:preli}
\vspace{-1.1mm}

This section provides a solid foundation for our proposed generative attention by comparisons and theoretical analysis.
Given a set of users $\mathcal{U}$ and a set of items $\mathcal{V}$, along with their interaction histories, SR organizes each user's interaction history into a chronological sequence. For a user $u \in \mathcal{U}$, this sequence is denoted as $\mathcal{S}^u=\left[v_1^u, v_2^u, \ldots, v_{\left|\mathcal{S}^u\right|}^u\right]$, where $v_i^u \in$ $\mathcal{V}$.

\vspace{-1mm}
\subsection{Comparisons of Attention Mechanisms}

For a user's interaction sequence $\mathcal{S}^u$ with a maximum allowed length $n$, sequences exceeding this length are truncated from the start, while shorter ones are padded with zeros to create a uniform sequence $\boldsymbol{s}=\left(s_1, s_2, \ldots, s_n\right)$. Each item in $\mathcal{V}$ is embedded into a latent space through the embedding matrix $\mathbf{E} \in \mathbb{R}^{|\mathcal{V}| \times d}$, where $d$ is the embedding dimension. To enrich the sequence representation with positional information, a trainable positional embedding matrix $\mathbf{P} \in \mathbb{R}^{n \times d}$ is added, resulting in the final sequence encoding:
\begin{equation}
\mathbf{M}_{\mathcal{S}^u}=\left[\mathbf{e}_{s_1}+\mathbf{p}_1, \mathbf{e}_{s_2}+\mathbf{p}_2, \ldots, \mathbf{e}_{s_n}+\mathbf{p}_n\right],
\label{equ:pos-emb}
\end{equation}
where $\mathbf{e}_{s_i}$ represents the embedding of the $i$-th item, and $\mathbf{p}_i$ denotes its corresponding positional embedding. 

\vspace{-1mm}
\subsubsection{\textbf{Traditional Attention}} 
A typical Transformer-style deterministic  mechanism computes attention weights $\mathbf{A}_{\text {det }}$ as:

\begin{equation}
\mathbf{A}_{\mathrm{det}}=\operatorname{softmax}\left(\frac{\mathbf{Q} \mathbf{K}^{\top}}{\sqrt{d}}\right), 
\label{equ:trad-att}
\end{equation}
where $\mathbf{Q} \in \mathbb{R}^{m \times d}, \mathbf{K} \in \mathbb{R}^{m \times d}$ represent the Query and Key matrices, respectively. 
However, this mechanism is deterministic, meaning that the attention matrix $\mathbf{A}_{\text {det }}$ is computed in a fixed manner for each input sequence. This rigidity limits the model's ability to dynamically adjust to evolving user behavior or capture intricate, non-linear dependencies inherent in SR tasks.

\vspace{-1mm}
\subsubsection{\textbf{Generative Attention}}
In generative attention, we treat the attention weights as random variables whose distribution is learned. Formally, we introduce a (latent) variable $\mathbf{z}$ and define:
\begin{equation}
   \mathbf{A}_{\mathrm{gen}} \sim p(\mathbf{A} \mid \mathbf{z}, X) 
   \label{equ:gen-att}
\end{equation}
where $X$ denotes the input sequence, and $\mathbf{z}$ represents latent factors. By sampling from this probability distribution,  generative attention incorporates stochasticity and variability, allowing for the adaptive generation of attention weights based on learned latent factors, which reflect the inherent variability and uncertainty in user behavior. The distribution $p(\cdot)$ is parameterized by a generative model (e.g., VAE or Diffusion Models).

\vspace{-1mm}
\subsection{Theoretical Foundation}

Below is a theoretical derivation proving that generative attention outperforms traditional attention in terms of expressiveness and its ability to handle stochasticity.

\begin{theorem}[Expressiveness]
Let $\mathcal{F}_{\text{det}}$ be the function class corresponding to deterministic attention and $\mathcal{F}_{\text{gen}}$ be the class of generative attention distributions parameterized by a latent variable $\mathbf{z}$. Assume $\mathbf{z}$ is drawn from a continuous latent space $\mathcal{Z} \subseteq \mathbb{R}^d$. Then, under standard smoothness and universal approximation conditions (\textit{e.g.}, neural networks with sufficient width), we have:
$\mathcal{F}_{\text{det}} \subseteq \mathcal{F}_{\text{gen}}$,
indicating that generative attention can represent a strictly larger family of attention distributions than deterministic mechanisms.
\end{theorem}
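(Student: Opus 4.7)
The plan is to establish the inclusion $\mathcal{F}_{\text{det}} \subseteq \mathcal{F}_{\text{gen}}$ by showing that every deterministic attention map can be realized as a degenerate (point-mass) conditional distribution in the generative family, and then to argue strict inclusion by exhibiting an element of $\mathcal{F}_{\text{gen}}$ that no deterministic map can match. First I would carefully formalize the two function classes: $\mathcal{F}_{\text{det}}$ consists of measurable maps $f: X \mapsto \mathbf{A}_{\text{det}}(X)$ taking values in the row-stochastic matrices (the image of the softmax in Equation~\ref{equ:trad-att}), while $\mathcal{F}_{\text{gen}}$ consists of conditional densities $p(\mathbf{A}\mid \mathbf{z},X)$ together with a prior on $\mathbf{z}\in\mathcal{Z}$, inducing the marginal $p(\mathbf{A}\mid X)=\int p(\mathbf{A}\mid \mathbf{z},X)\,p(\mathbf{z})\,d\mathbf{z}$. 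Identifying a deterministic map $f$ with the Dirac measure $\delta_{f(X)}$ on the stochastic-matrix simplex gives a clean embedding of $\mathcal{F}_{\text{det}}$ into the space of conditional distributions.

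Next I would prove the inclusion itself in two steps. Step one: given any $f\in\mathcal{F}_{\text{det}}$, define the generative model whose decoder ignores $\mathbf{z}$ and outputs the constant $f(X)$, i.e.\ $p(\mathbf{A}\mid \mathbf{z},X)=\delta(\mathbf{A}-f(X))$; this is the exact realization of $f$ inside $\mathcal{F}_{\text{gen}}$ when degenerate conditionals are admitted. Step two, to handle the case where $\mathcal{F}_{\text{gen}}$ is restricted to absolutely continuous densities (e.g.\ Gaussian decoders as used in the VAE instantiation), I would invoke the standard universal approximation theorem: for any continuous $f$ and any $\varepsilon>0$, there exists a neural decoder $g_\theta(\mathbf{z},X)$ and a variance schedule $\sigma^2\to 0$ such that the family $p_\theta(\mathbf{A}\mid \mathbf{z},X)=\mathcal{N}(g_\theta(\mathbf{z},X),\sigma^2 I)$, composed with the softmax projection onto the simplex, converges weakly to $\delta_{f(X)}$; in the diffusion setting the analogous limit is obtained by driving the terminal denoising variance to zero. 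Thus $f$ lies in the closure of $\mathcal{F}_{\text{gen}}$, and if $\mathcal{F}_{\text{gen}}$ is taken closed under weak limits (a mild and standard assumption), then $f\in\mathcal{F}_{\text{gen}}$.

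For strict inclusion I would construct an explicit witness: fix an input $X_0$ and choose $p(\mathbf{A}\mid X_0)$ to be a nontrivial mixture of two distinct stochastic matrices $\mathbf{A}_1\neq \mathbf{A}_2$, obtainable by letting $\mathbf{z}\sim\mathrm{Bernoulli}(1/2)$ and setting the decoder to return $\mathbf{A}_z$. Any deterministic $f$ evaluated at $X_0$ returns a single matrix, so its pushforward has zero variance, whereas the constructed $p(\mathbf{A}\mid X_0)$ has strictly positive variance along the direction $\mathbf{A}_1-\mathbf{A}_2$; hence no $f\in\mathcal{F}_{\text{det}}$ matches this conditional, establishing $\mathcal{F}_{\text{det}}\subsetneq\mathcal{F}_{\text{gen}}$.

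The main obstacle is the Dirac/closure issue in the second step of the inclusion argument: strictly speaking, a point mass is not an element of a typical parametric density family, so the cleanest statement requires either (i) explicitly defining $\mathcal{F}_{\text{gen}}$ to include degenerate conditionals, or (ii) interpreting the containment in a weak/approximation sense and formally invoking closure under weak convergence. I would handle this by stating the smoothness and universal-approximation hypotheses precisely (continuity of $f$, sufficient width of the decoder, and admissibility of vanishing-variance schedules) so that the limiting argument is rigorous, and then noting that the strict-inclusion example does not depend on this subtlety and thus holds unconditionally.
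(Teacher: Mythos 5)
Your proposal follows the same skeleton as the paper's proof---deterministic attention is recovered as the degenerate special case in which the latent distribution collapses, and strictness comes from the extra randomness the latent variable affords---but you execute it with substantially more rigor than the paper does. The paper's argument is essentially a sketch: it asserts that ``since $\mathbf{z}$ is drawn from a continuous space, the family of realizable distributions is strictly larger than a single deterministic map,'' invokes the universal approximation theorem in passing, and handles the embedding of $\mathcal{F}_{\text{det}}$ into $\mathcal{F}_{\text{gen}}$ with the single phrase ``by collapsing the latent distribution.'' You fill the two real gaps this leaves. First, you confront the Dirac-versus-density issue head on: a point mass is not a member of a typical parametric decoder family, so the collapse claim needs either degenerate conditionals admitted by fiat or a vanishing-variance limit ($\mathcal{N}(g_\theta(\mathbf{z},X),\sigma^2 I)$ with $\sigma^2 \to 0$) together with closure under weak limits---a subtlety the paper glosses over entirely (and which its companion Stochasticity theorem also relies on). Second, you supply an explicit witness for strictness via a two-point mixture with positive variance along $\mathbf{A}_1-\mathbf{A}_2$, where the paper merely asserts the strict containment (indeed, the theorem statement itself writes $\mathcal{F}_{\text{det}} \subseteq \mathcal{F}_{\text{gen}}$ while claiming ``strictly larger'' in prose, an inconsistency your argument resolves in favor of strictness).

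One small repair is needed: your witness draws $\mathbf{z}\sim\mathrm{Bernoulli}(1/2)$, which contradicts the theorem's hypothesis that $\mathbf{z}$ lives in a continuous latent space $\mathcal{Z}\subseteq\mathbb{R}^d$. The fix is immediate and does not weaken the argument: take $\mathbf{z}\sim\mathcal{N}(0,\mathbf{I})$ with any decoder that is nonconstant in $\mathbf{z}$ (or a smooth approximation of the piecewise-constant two-matrix decoder); the pushforward still has strictly positive variance at $X_0$, which no deterministic map can match. With that adjustment your proof is correct and strictly stronger than the one in the paper.
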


\begin{proof}[Proof]
In deterministic attention, each input sequence $X$ yields exactly one attention matrix $\mathbf{A}_{\text{det}}(X)$. This corresponds to a single function $f_\theta: X \mapsto \mathbf{A}_{\text{det}}$ in a function space $\mathcal{F}_{\text{det}}$.

In contrast, generative attention introduces a latent variable $\mathbf{z}$. One can view $\mathbf{A}_{\text{gen}}$ as a distribution over $\mathbf{z}$ (\textit{i.e.}, $\mathbf{z} \sim q(\mathbf{z} \mid X)$) and a conditional mapping $g_\phi: (\mathbf{z}, X) \mapsto \mathbf{A}_{\text{gen}}$. The attention weights become samples from a mixture (or family) of possible functions:
\begin{equation}
    \mathbf{A}_{\text{gen}} \sim \int g_\phi(\mathbf{z}, X) q(\mathbf{z} \mid X) d\mathbf{z}.
    \label{equ:pro-expres}
\end{equation}
Since $\mathbf{z}$ is drawn from a continuous space, the family of realizable distributions is strictly larger than a single deterministic map.
Neural networks $g_\phi$ can approximate continuous functions on compact sets arbitrarily well (Universal Approximation Theorem \cite{lu2020universal}). Hence, by allowing a probabilistic mixture over $\mathbf{z}, \mathcal{F}_{\text {gen }}$ can represent a strictly greater variety of attention transformations than the single function in $\mathcal{F}_{\text {det }}$.

Therefore, $\mathcal{F}_{\text {det }} \subseteq \mathcal{F}_{\text {gen }}$. This increased expressiveness is advantageous in SR tasks, where user behavior is dynamic and  discrete, as it enables the model to adapt to shifting patterns and capture the non-linear complexities of user preferences.
Intuitively, generative attention can recover deterministic attention as a special case (by collapsing the latent distribution).
\end{proof}


\begin{theorem}[Stochasticity]
    Let $\mathcal{P}_{\text {det }}$ be the set of all probability distributions induced by deterministic attention (i.e., a delta distribution around $\mathbf{A}_{\text {det }}$ ) and $\mathcal{P}_{\text {gen }}$ be the set of distributions realizable by a generative attention mechanism with latent variable $\mathbf{z}$. Then $\mathcal{P}_{\mathrm{det}} \subset \mathcal{P}_{\mathrm{gen}}$, indicating that generative attention can encode user-driven randomness or noise in attention weights, while deterministic attention essentially disregards such stochasticity.
\end{theorem}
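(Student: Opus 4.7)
The plan is to establish the strict containment $\mathcal{P}_{\text{det}} \subset \mathcal{P}_{\text{gen}}$ in two stages: first prove the inclusion $\mathcal{P}_{\text{det}} \subseteq \mathcal{P}_{\text{gen}}$ by showing that deterministic attention is recoverable as a degenerate case of the generative construction, and then exhibit an explicit element of $\mathcal{P}_{\text{gen}} \setminus \mathcal{P}_{\text{det}}$ to certify that the inclusion is strict. This mirrors the structure of the Expressiveness theorem but shifts the comparison from function classes to induced probability measures on the attention simplex.

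For the inclusion step, I would fix any deterministic map $f_\theta: X \mapsto \mathbf{A}_{\text{det}}(X)$ and construct a corresponding generative mechanism that induces the Dirac measure $\delta_{\mathbf{A}_{\text{det}}(X)}$. Concretely, choose the latent posterior $q(\mathbf{z} \mid X) = \delta(\mathbf{z} - \mathbf{z}_0)$ at an arbitrary reference point $\mathbf{z}_0 \in \mathcal{Z}$ and let the decoder satisfy $g_\phi(\mathbf{z}_0, X) = f_\theta(X)$. Substituting into the generative marginal
\begin{equation}
p_{\text{gen}}(\mathbf{A} \mid X) = \int \delta\bigl(\mathbf{A} - g_\phi(\mathbf{z}, X)\bigr)\, q(\mathbf{z} \mid X)\, d\mathbf{z}
\end{equation}
collapses it to $\delta_{f_\theta(X)}$, so every deterministic distribution lies in $\mathcal{P}_{\text{gen}}$.

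For strict inclusion, I would construct a simple separating example: take $q(\mathbf{z}\mid X) = \mathcal{N}(\mathbf{0}, \mathbf{I})$ with any non-constant decoder $g_\phi$ that maps distinct $\mathbf{z}$ to distinct points on the attention simplex (composing a linear map with a softmax suffices). The resulting push-forward distribution has strictly positive differential entropy, or at minimum positive variance in every coordinate that is not constant under $g_\phi$. Any distribution in $\mathcal{P}_{\text{det}}$, however, is a point mass and therefore has zero variance. Since variance (or entropy) is a well-defined functional that separates the two, this constructed distribution cannot belong to $\mathcal{P}_{\text{det}}$, establishing strictness. I would close with the interpretive remark that generative attention thus \emph{encodes} user-driven noise as a first-class quantity, while deterministic attention can only suppress it.

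The main obstacle I expect is measure-theoretic hygiene rather than deep mathematics: Dirac measures and continuous densities live on different footings, so the comparison has to be made through a common functional (variance, total variation distance, or entropy against a reference measure on the simplex) rather than through density ratios. A secondary subtlety is respecting the simplex constraint imposed by the softmax, which means the decoder's image must be chosen to lie in the probability simplex, but this is easily handled by composing $g_\phi$ with a softmax as in the original model, and ensures that the constructed push-forward is a legitimate element of $\mathcal{P}_{\text{gen}}$.
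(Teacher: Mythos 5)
Your proposal is correct and takes essentially the same route as the paper: both establish the inclusion by collapsing $q(\mathbf{z}\mid X)$ to a Dirac mass at a single $\mathbf{z}^{*}$ so the generative marginal degenerates to $\delta\bigl(\mathbf{A}-\mathbf{A}_{\mathrm{det}}(X)\bigr)$, and both obtain strictness by permitting a non-degenerate latent distribution. Your execution is in fact tighter than the paper's, whose strictness step is a one-line assertion (``by permitting variance in $\mathbf{z}$, the generative model can produce diverse, stochastic attention outcomes''), whereas you supply an explicit separating witness with a concrete functional (positive variance versus the zero variance of any point mass) plus the measure-theoretic and simplex-constraint caveats the paper omits.
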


\begin{proof}[Proof]
    Deterministic attention yields $\mathbf{A}_{\text {det }}(X)$ for each input $X$. From a distributional perspective, this can be viewed as
\begin{equation}
    \mathcal{P}_{\operatorname{det}}(\mathbf{A} \mid X)=\delta\left(\mathbf{A}-\mathbf{A}_{\operatorname{det}}(X)\right),
    \label{equ:pro-stoc1}
\end{equation}
where $\delta$ is Dirac delta function \cite{engquist2005discretization}.
In generative attention, we allow $\mathbf{z} \sim q(\mathbf{z} \mid X)$ to be drawn from a non-degenerate distribution:
\begin{equation}
    \mathbf{A}_{\mathrm{gen}} \sim \mathcal{P}_{\operatorname{gen}}(\mathbf{A} \mid X)=\int p(\mathbf{A} \mid \mathbf{z}, X) q(\mathbf{z} \mid X) d \mathbf{z}.
    \label{equ:pro-stoc2}
\end{equation}
This formulation can model a potentially uncountably infinite set of distributions, enabling variability in attention across repeated observations of the same input $X$.

Deterministic attention is recovered as a special case when $q(\mathbf{z} \mid X)$ is a delta distribution, \textit{i.e.}, all mass is concentrated at a single $\mathbf{z}^*$. By permitting variance in $\mathbf{z}$, the generative model can produce diverse, stochastic attention outcomes, thus capturing real-world phenomena (\textit{e.g.}, user uncertainty, inconsistent behaviors).
\end{proof}

\vspace{-2mm}
\section{METHODOLOGY}
\label{sec:Methodology}
\vspace{-.5mm}

This section provides a detailed explanation of how unsupervised latent distribution learning is used to express transformations, enabling the generation of attention distributions. 
As shown in \Cref{fig:illustration-1}, the need for transformations in attention mechanisms is eliminated. To enhance the adaptability of GenAtt, we have specifically designed multi-head and multi-layer configurations.

\begin{figure*}
\centering
    \includegraphics[width=.95\linewidth]{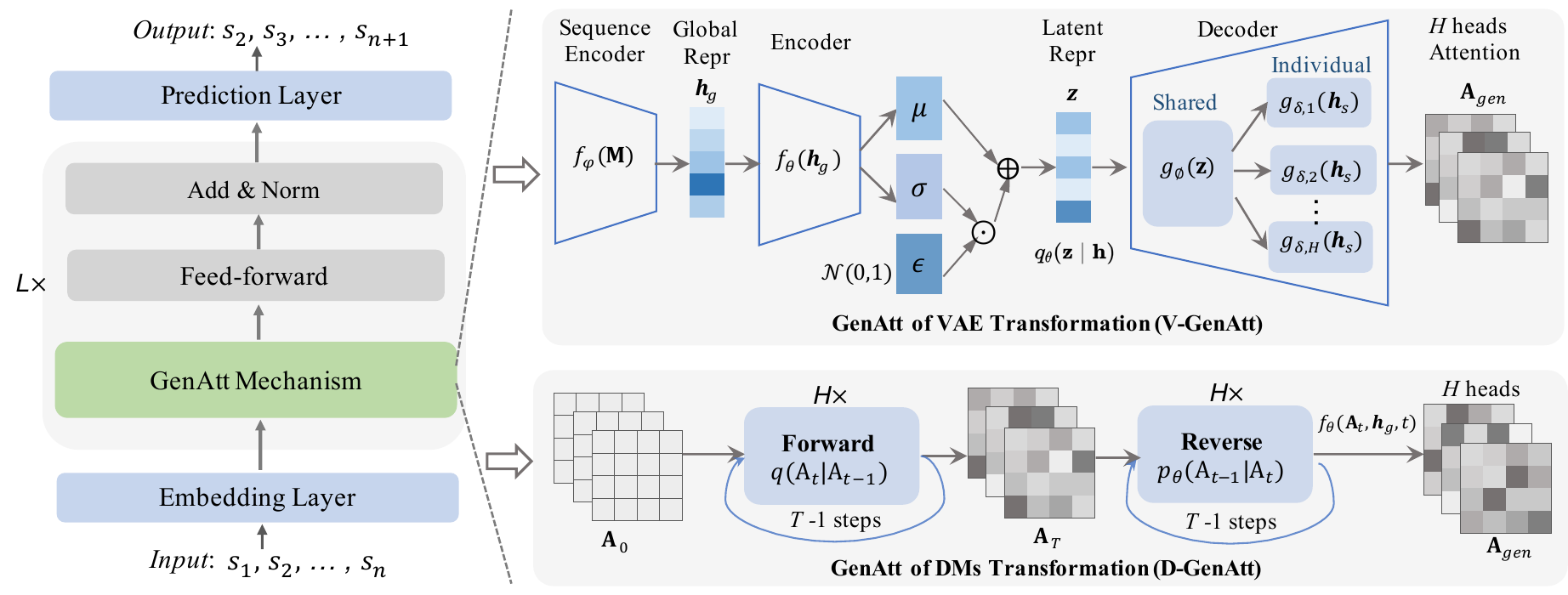}
    \vspace{-2.8mm}
\caption{Visualization of Architecture for Generating Attention Distributions. }
\vspace{-2.6mm}
\label{fig:illustration-1}
\end{figure*}

\subsection{VAE Implementation of GenAtt}

A VAE-based generative model (\textit{i.e.}, V-GenAtt) consists of two main components: the Encoder, which maps the input representations to a distribution over latent variables, and the Decoder, which uses these latent variables to generate new distributions. To make V-GenAtt well-suited for SR tasks and capable of generating stochastic attention distributions that capture complex sequence dependencies, we design novel encoder and decoder components. In particular, we first introduce a sequence encoder that takes into account the temporal dependencies within the sequence and encodes the entire sequence into a latent space,  which is formalized as: 
\begin{equation}
    \mathbf{S}, \mathbf{h}_{g}=f_{\varphi}(\mathbf{\mathbf{M}_{\mathcal{S}^u}}),
\end{equation}
where $\mathbf{S}$ is the sequence-level representation, $\mathbf{M}_{\mathcal{S}^u}$ is the encoding of the input sequence from user $u$, and $\mathbf{h}{g}$ is the global representation that summarizes the entire sequence. The function $f_{\varphi}$ can be implemented using GRU neural networks, which allow the model to effectively capture the sequential nature of the data and produce a rich latent space representation for further processing.

The following encoder is expressed as follows:
\begin{equation}
    \mathbf{\mu}, \log \mathbf{\sigma}^2=f_\theta\left(\mathbf{h}_{g}\right),
\end{equation}
where $f_\theta$ is a neural network parameterized by $\theta$. The latent variable $\mathbf{z}$ is sampled using the reparameterization trick:
$\mathbf{z}=\mu+\sigma \odot \epsilon$, 
where $\epsilon \sim \mathcal{N}(0, \mathbf{I})$ is random noise and $\sigma=\exp \left(\frac{\log \sigma^2}{2}\right)$.   

In the V-GenAtt decoder, the overall process can be described as a combination of a shared encoder followed by individual encoder for each attention head. 
Given the latent variable $\mathbf{z} \in \mathbb{R}^{B \times d_{h}}$ (with $B$ as batch size 
 and $d_{h}$ as the dimensionality of the latent space), the shared encoder produces a shared representation:
   $ \mathbf{h}_{s}=g_{\phi}(\mathbf{z}),$
where $g_{\phi}$ is the shared MLP function.

For each attention head $h$, an individual decoder function $g_{\theta, h}$ projects $\mathbf{h}_{s} \in \mathbb{R}^{B \times d_{h}}$ into the attention matrix $\mathbf{A}_h$:
\begin{equation}
\mathbf{A}_h=g_{\delta, h}\left(\mathbf{h}_{s}\right),
\end{equation}
where $g_{\delta, h}$ is a fully connected layer specific to attention head $h$, and $\mathbf{A}_h \in \mathbb{R}^{B \times n \times n}$ is the attention matrix for head $h$ reshaped from $\mathbb{R}^{B \times n^2}$.
The full multi-head attention distribution $\mathbf{A_{gen}} \in \mathbb{R}^{B \times H \times n \times n}$ is constructed by stacking all the individual attention matrices for each head, and $B$ reflects the batch size.

The global representation $\mathbf{h}_g$, derived from the encoder, encapsulates high-level sequence-wide information. The decoder utilizes this rich representation to generate the attention weights, which are learned through a generative process. Rather than applying a linear and fixed transformation, VAE allows the attention matrix to be dynamically generated based on the latent space learned during training. This means that the attention distribution reflects the model's understanding of how different sequence elements interact, with the learned latent variables guiding the attention mechanism.
VAE inherently approximates the true posterior distribution of the data, and through this approximation, V-GenAtt learns to adjust attention distributions in a way that is context-dependent. Each attention head is able to focus on different parts of the sequence depending on the global context. By learning attention distribution in an unsupervised manner, VAE enables GenAtt to model complex, context driven relationships between sequence elements, which is  important for SR tasks where dependencies can vary significantly.

\subsection{Diffusion Models Implementation of GenAtt}
DMs primarily generate data through a process of gradual noise addition and subsequent denoising \cite{jiang2024diffmm}, which is structured in two stages: the forward process and the reverse process. In the context of GenAtt, this framework is adapted to generate attention distributions. In our DMs implemented generative attention mechanism D-GenAtt, the initial attention matrix $\mathbf{A}_0$ can be set to a zero matrix or a random initialization. This choice does not affect the generative process, as D-GenAtt is designed to learn the distribution of attention weights from the data itself. The key idea is that $\mathbf{A}_0$ serves merely as a starting point, and it is progressively modified through the forward diffusion process. Rather than aiming for the generated attention to match the initial attention, the model learns to generate a flexible, data-dependent attention distribution.

In the forward diffusion process, the clean attention weights $\mathbf{A}_0$ are progressively corrupted by noise over $T$ time steps, and $\beta_t$ controls the variance of the noise at each timestep.
Let $\mathbf{A}_0 \in \mathbb{R}^{B \times \text { H } \times n \times n}$. The forward diffusion process is given by:
\begin{equation}
    \mathbf{A}_t=\sqrt{\alpha_t} \mathbf{A}_0+\sqrt{1-\alpha_t} \epsilon_t \quad \text { for } \quad t=1,2, \ldots, T
\end{equation}
Here, $\alpha_t=1-\beta_t$, $\beta_t \in\left[\beta_{s}, \beta_{e}\right]$ is a noise schedule that determines the amount of noise added at each timestep, with $\beta_{s}$ and $\beta_{e}$ controlling the range of noise, $\epsilon_t \sim \mathcal{N}(0, I)$ is the Gaussian noise added at timestep $t$.
The noisy attention $\mathbf{A}_t$ are computed by applying the noise schedule to  $\mathbf{A}_0$, and the sequence of noisy logits $\mathbf{A}_t$ is generated for $t=1,2, \ldots, T$.

The core of the generative process involves the reverse diffusion process, which is accomplished through a neural network that predicts the noise $\hat{\epsilon}_t$ added at each timestep.
At each timestep $t$, the model predicts the noise $\hat{\epsilon}_t$ given the noisy attention $\mathbf{A}_t$ and the global representation $\mathbf{h}_{g}$ as:
$\hat{\epsilon}_t=f_\theta\left(\mathbf{A}_t, \mathbf{h}_{g}, t\right).$
Using the predicted noise $\hat{\epsilon}_t$, the denoised attention logits $\mathbf{A}_0$ are:
\begin{equation}
    \mathbf{A}_{t-1}=\frac{1}{\sqrt{\alpha_t}}\left(\mathbf{A}_t-\sqrt{1-\alpha_t} \hat{\epsilon}_t\right)
\end{equation}
The denoising process is repeated for $T$ time steps, progressively recovering the clean attention  $\mathbf{A}_0$. The final output after reverse diffusion is the denoised attention matrix $\mathbf{A}_{gen}$, given by:
\begin{equation}
    \mathbf{A}_{gen}=\mathbf{A}_t-\hat{\epsilon}_t \cdot \sqrt{1-\alpha_t}
\end{equation}

During the forward diffusion process, the noisy attention matrix $\mathbf{A}_t$ evolves as Gaussian noise is progressively injected. The global representation $\mathbf{h}_g$ is used to guide the transformation of this noisy matrix into a meaningful attention matrix. It influences the denoising process by providing context, ensuring that dependencies across sequence elements are captured in a manner consistent with the high-level features represented by $\mathbf{h}_g$.
Thus, even though the attention matrix evolves step by step, the global representation ensures that these evolutions are contextually appropriate, based on the sequence as a whole. By conditioning the attention distribution on this global representation, the model can dynamically learn to generate attention matrices that reflect the contextual relationships and dependencies among sequence elements.

\vspace{-1mm}
\subsection{Optimization}
\vspace{-0.5mm}
In SR tasks, the objective function typically compares predicted outcomes with the ground truth, often using binary cross-entropy:
\vspace{-0.5mm}
\begin{equation}
\vspace{-0.5mm}
    \mathcal{L}_{\mathrm{Rec}}=-\sum_{i=1}^N\left[y_i \log \left(\hat{y}_i\right)+\left(1-y_i\right) \log \left(1-\hat{y}_i\right)\right],
    \vspace{-0.6mm}
\end{equation}
where $y_i$ represents the true label, $\hat{y}_i$ is the predicted probability, and $N$ is the number of interactions or items.
Many existing SR models treat Transformer-based architectures as methods for capturing dependencies in sequential data, without considering the potential loss introduced by the Transformer or attention mechanisms. However, our GenAtt perspective necessitates the generative process, which means that typically a generative models related loss is needed to reflect the generative nature. As a result, loss function of GenAtt incorporates two components:
\vspace{-1mm}
\begin{equation}
\vspace{-0.5mm}
    \mathcal{L} =\mathcal{L}_{Rec}+\gamma \mathcal{L}_{Gen}.
    \vspace{-0.5mm}
\end{equation}

As mentioned earlier, the primary goal of the GenAtt generation process is not to make the generated attention distributions match the original ones exactly, but rather to generate attention distributions that can better reflect the complex and dynamic nature of user preferences. Therefore, the objective function for our generative models differs from traditional models by focusing solely on the distribution alignment, rather than exact reconstruction. Specifically, for the VAE-based V-GenAtt, the loss function is: $\mathcal{L}_{Gen}=\operatorname{KL}\left[q_\theta(\mathbf{z} \mid \mathbf{X}) \| p(\mathbf{z})\right]$.
Similarly, for the D-GenAtt, the $\mathcal{L}_{gen}$ loss is: $\mathcal{L}_{Gen}=\mathbb{E}_{\mathbf{A}_t, \epsilon_t}\left[\left\|\epsilon_t-\hat{\epsilon}_t\right\|^2\right]$.


\vspace{-0.6mm}
\section{EXPERIMENTS}
\vspace{-0.5mm}

We comprehensively assess the generative attention perspective in this section by testing two implementations, V-GenAtt and D-GenAtt. The experimental results address the following \textbf{R}esearch \textbf{Q}uestions (\textbf{RQ}s): \textbf{RQ1}: Does GenAtt provide superior recommendation results compared to state-of-the-art baselines?
\textbf{RQ2}: What impact does the generative process in modeling attention distribution have on SR performance?
  \textbf{RQ3}: What are the distinguishing features of GenAtt compared to traditional self-attention mechanisms?
 \textbf{RQ4}: Can GenAtt enhance the expressiveness of SR models?
 \textbf{RQ5}: Does GenAtt have the capability to improve the diversity of recommendations? 
 \textbf{RQ6}: Can the generative process of GenAtt eliminate the need for projection matrices commonly used in attention mechanisms?
     \textbf{RQ7}: Is GenAtt computationally efficient in terms of complexity and parameter requirements?

\vspace{-1mm}
\subsection{Experimental Settings}
\vspace{-0.5mm}

 We evaluate the two implementations of generative attention mechanism using \textbf{four} widely adopted real-world datasets. These datasets cover a variety of categories and exhibit significant variations in matrix densities, reflecting the implicit user-item interactions inherent in recommender systems.
Two subsets, \textbf{Beauty} and \textbf{CDs}, are selected from the extensive Amazon dataset introduced by \cite{he2016ups}. The dataset provides a comprehensive collection of reviews across various product categories. In addition, \textbf{Anime} and \textbf{ML-1M} \cite{harper2015movielens} provide reviews for anime and movies, featuring 43 and 18 categories of items, respectively.  Following common practices in recommendation system evaluation \cite{kang2018self, xie2022contrastive}, users and items with fewer than 10 interactions are excluded to ensure data quality. The widely employed leave-one-out strategy \cite{kang2018self, xie2022contrastive} is adopted to evaluate the performance of each method.

We compare our approaches with \textbf{eleven} state-of-the-art baselines in SR, which are categorized into three groups: 
\vspace{-0.5mm}
\begin{itemize}[leftmargin=*]
\vspace{-0.5mm}
\item \textbf{CNN-based models.} \textbf{Caser} \cite{bai2019personalized} aims to capture high-order patterns by convolutional operations for SR. 
\item \textbf{Transformer-based models.} \textbf{GC-SAN} \cite{xu2019graph} integrates GNN with a self-attention mechanism. \textbf{SASRec} \cite{kang2018self} is a seminal SR method that depends on the attention mechanism. \textbf{BERT4Rec} (Bert) \cite{sun2019bert4rec} employs the bi-directional self-attention mechanism as its backbone. \textbf{CL4SRec} \cite{xie2022contrastive} incorporates CL into SR based on the basic self-attention. \textbf{DuoRec} \cite{qiu2022contrastive} provides a model-level augmentation. \textbf{STRec} \cite{li2023strec} is a recently proposed cross-attention SR model. \textbf{ICSRec} \cite{qin2024intent} generates representations for intentions.
\item \textbf{SR methods that utilize generative models or stochasticity.} \textbf{DiffuRec} \cite{li2023diffurec} is a state-of-the-art method that adapts diffusion models to SR. \textbf{PDRec} \cite{ma2024plug} employs the diffusion
models as a flexible plugin. \textbf{STOSA} \cite{fan2022sequential} is a stochastic attention based method. 
\vspace{-1mm}
\end{itemize}
\vspace{-1mm}

The proposed GenAtt models are implemented using PyTorch, with experiments conducted on an NVIDIA RTX A5000 GPU. To fine-tune the hyperparameters, we perform an extensive grid search across all compared methods, and report performance based on the peak validation results. The embedding dimension is tested for values in the set $\{32, 64, 128\}$, while the maximum sequence length is varied from 10 to 200, with a default length of 50 for our models. The learning rate is optimized within $\{10^{-3}, 10^{-4}\}$. 
For GenAtt models, we explore dropout rates in $\{0.0, 0.1, 0.2, 0.3, 0.4, 0.5, 0.6\}$ and adjust $\gamma$ within $\{0.1, 0.2, 0.4, 0.6, 2.0, 4.0\}$. We set the time steps of D-GenAtt directly equal to the maximum sequence length $n$. The dimensionality of the global representation and the latent representation in the VAE are both set to twice the size of the item embeddings. 
For Transformer-based methods, we investigate the number of layers ($\{1,2,3\}$), the number of attention heads ($\{1,2,4\}$). 
An early stopping strategy halts training if NDCG@20 on the validation set does not improve for 20 epochs.

We employ three primary accuracy metrics: NDCG@$N$, Recall@$N$, and Mean Reciprocal Rank (MRR). To assess GenAtt's impact on diversity, we also incorporate two diversity-focused metrics: Category Coverage (CC@$N$) \cite{puthiya2016coverage, wu2019pd} and Intra-list Distance (ILD@$N$) \cite{puthiya2016coverage, zhang2008avoiding}. These evaluations are conducted for \(N \in\{5,10, 20\}\).

\vspace{-1mm}
\subsection{Results Analysis}

\begin{table*}[tp]
\centering
  \fontsize{7.5}{8.8}\selectfont
  \caption{Overall Performance Comparison. The best results are highlighted in bold, the second-best are marked with an asterisk, and the third-best are underlined. Improvements of both GenAtt models are statistically significant with a t-test $p<0.05$. The \textit{improv.} refers to the percentage increase of the better implementations of GenAtt compared to the best baseline.}
  \vspace{-2.6mm}
  \setlength{\tabcolsep}{1.2mm}{
    \begin{tabular}{ccccccccccccc|cll}
    \toprule[0.9pt]
Dataset&Metric&Caser&SASRec&BERT&STOSA&GC-SAN&CL4SRec&DuoRec&DiffuRec&ICSRec &STRec&PDRec&\textbf{V-GenAtt}& \textbf{D-GenAtt} & \textit{Improv.}
\cr\midrule[0.7pt] 
    \multirow{7}{*}{Beauty} 
    &Recall@5 &0.1096 &0.1293 &0.1148  &0.1302 &0.1204 &0.1312 &0.1329 &0.1390 &\underline{0.1417} &0.1334 & 0.1356 &\textbf{0.1596} &0.1519* &12.63\%  \\
    &Recall@10 &0.1782 &0.1972 &0.1800  &0.2016 &0.1894 &0.1926 &0.1985 &\underline{0.2056} & 0.2029 &0.1997 & 0.2010 & \textbf{0.2350} &0.2237* &14.30\% \\
    &Recall@20 &0.2631 &0.2789 &0.2646  &0.2817 &0.2711 &0.2834 &0.2849 &0.2905 &\underline{0.2920}&0.2800 &0.2837 &\textbf{0.3114} & 0.3084* &6.64\%  \\
    &NDCG@5 &0.0776 &0.0803 &0.0762 &0.0839 &0.0795 &0.0830 &0.0843 &0.0835 & \underline{0.0874}&0.0852 &0.0841 &\textbf{0.1110} & 0.1017*  &27.0\% \\
    &NDCG@10 &0.0823 &0.1056 &0.0890  &0.1133 &0.0967 &0.1104 &0.1125 &0.1170 &\underline{0.1193}
    &0.1152 &0.1097  & \textbf{0.1340} &0.1251* &12,32\% \\
    &NDCG@20 &0.1158 &0.1228 &0.1179 &0.1219 &0.1215 &0.1261 &0.1305 &\underline{0.1354} & 0.1340
    &0.1268 &0.1294 &\textbf{0.1533} &0.1464* &13.22\%  \\ 
    &MRR &0.0803 &0.0879 &0.0826 &0.0910 &0.0898 &0.0907 &0.0945 &0.0971 &\underline{0.0986}
    &0.0924 &0.0915 & \textbf{0.1126} & 0.1088* &14.20\% \\
    \midrule[0.6pt]
    
    \multirow{7}{*}{CDs} 
    &Recall@5 &0.0313 &0.0371 &0.0365 &0.0385 &0.0349 &0.0376 &0.0408 &0.0410 &\underline{0.0423}
    &0.0392 &0.0384 &\textbf{0.0455} &0.0442* &7.57\%  \\
    &Recall@10 &0.0450 &0.0516 &0.0473  &0.0519 &0.0511 &0.0525 &0.0529 &0.0537 & 0.0540 &\underline{0.0543} & 0.0523 & \textbf{0.0584} &0.0570* &7.55\% \\
    &Recall@20 &0.0734 &0.0742 &0.0740  &0.0759 &0.0745 &0.0751 &0.0762 &\underline{0.0771} & 0.0768 &0.0753 &0.0766 & \textbf{0.0838} & 0.0830* &8.69\% \\
    &NDCG@5 &0.0206 &0.0237 &0.0253  &0.0241 &0.0230 &0.0237 &0.0262 &0.0259 & \underline{0.0270}
    &0.0267 &0.0256 &\textbf{0.0298} &0.0274*  &10.37\% \\
    &NDCG@10 &0.0249 &0.0273 &0.0270  &0.0278 &0.0264 &0.0280 &0.0284 &0.0287 & \underline{0.0291}
    &0.0279 & 0.0285 & \textbf{0.0326} &0.0322* &12.03\% \\
    &NDCG@20 &0.0328 &0.0339 &0.0345 &0.0362 &0.0337 &0.0346 &0.0354 &\underline{0.0363} & 0.0357
    &0.0360 &0.0343 & 0.0384* & \textbf{0.0388} &6.89\% \\ 
    &MRR &0.0214 &0.0223 &0.0230  &0.0235 &0.0225 &0.0233 &0.0236 &0.0241 & \underline{0.0249}
    &0.0246 &0.0238 & \textbf{0.0277} &0.0268* &11.24\% \\
    \midrule[0.6pt]
    
    \multirow{7}{*}{Anime} 
    &Recall@5 &0.2672 &0.2902 &0.2847 &0.2860 & 0.2898 &0.2909 &\underline{0.2922} &0.2913 & 0.2919&0.2873 &0.2916 & \textbf{0.3125} & 0.3120* &6.95\%  \\
    &Recall@10 &0.3769 &0.4105 &0.3942  &0.4100 &0.4054 &0.4110 &0.4113 &0.4124 & \underline{0.4133}&0.4120 & 0.4116 & 0.4285* &\textbf{0.4396} & 6.36\% \\
    &Recall@20 &0.5899 &0.6010 &0.5873 &0.5914 &0.5910 &0.5927 &0.6002 &\underline{0.6061} & 0.6055 &0.5912 &0.6025 & 0.6325* & \textbf{0.6329} &4.42\% \\
    &NDCG@5 &0.1758 &0.2035 &0.1895 &0.2040 &0.2006 &0.2023 &0.2051 &0.2064 & \underline{0.2067}
    &0.2060 &0.2054 & \textbf{0.2276} & 0.2184* &10.11\% \\
    &NDCG@10 &0.2092 &0.2350 &0.2234  &0.2317 &0.2290 &0.2361 &0.2357 &0.2373 & \underline{0.2383}
    &0.2365 &0.2370 & \textbf{0.2581} & 0.2562* &8.31\% \\
    &NDCG@20 &0.2804 &0.2914 &0.2817 &0.2901 &0.2905 &0.2921 &\underline{0.2956} &0.2949 & 0.2945
    &0.2916 &0.2937 & 0.3107* & \textbf{0.3130} &5.89\% \\ 
    &MRR &0.1893 &0.2098 &0.1920  &0.2075 &0.2049 &0.2102 &0.2110 &0.2115 & \underline{0.2120}
    &0.2109 &0.2105 & \textbf{0.2241} &0.2223* &5.71\% \\
    \midrule[0.6pt]
    
    \multirow{7}{*}{ML-1M}  
    &Recall@5 &0.0759 &0.0791 &0.0726 &0.0780 &0.0775 &0.0795  &0.0812 &\underline{0.0833} & 0.0828
    &0.0797 &0.0803 & \textbf{0.0933} &0.0903* & 12.00\% \\
    &Recall@10 &0.1377 &0.1429 &0.1383  &0.1440 &0.1424 &0.1435 &0.1426 &0.1439 & \underline{0.1445} &0.1432 & 0.1420 &\textbf{0.1560}& 0.1543*& 7.96\% \\
    &Recall@20  &0.1870 &0.2015 &0.1864 &0.2018 &0.1923 &0.2010 &0.2053 &0.2037 & \underline{0.2060} &0.2029 &0.2021 &\textbf{0.2232} & 0.2207*& 8.35\% \\
    &NDCG@5 &0.0490 &0.0505 &0.0480 &0.0493 &0.0491 &0.0502  &0.0518 &\underline{0.0539} & 0.0536
    &0.0523 &0.0527 & \textbf{0.0599} & 0.0582* & 11.13\%  \\
    &NDCG@10 &0.0684 &0.0710 &0.0686 &0.0719 &0.0706 &0.0720 &0.0734 &0.0742 & \underline{0.0751}
    &0.0731 & 0.0728 & 0.0776 &\textbf{0.0785} & 4.53\% \\
    &NDCG@20 &0.0826 &0.0849 &0.0796 &0.0836 &0.0845 &0.0860  &0.0859 &\underline{0.0875} & 0.0871
    &0.0864 &0.0860 & 0.0936* & \textbf{0.0941} & 7.54\% \\  
    &MRR &0.0586 &0.0614 &0.0594 &0.0610 &0.0608 &0.0612 &0.0628 &0.0632 & \underline{0.0640}
    &0.0625 &0.0618 &\textbf{0.0674}  &0.0670* &5.31\% \\
    \bottomrule[0.9pt]
    \end{tabular}}
    \vspace{-1.5mm}
    \label{tab:all-performance}
\end{table*}

From \Cref{tab:all-performance}, we can draw the following key observations:
\vspace{-1mm}
\begin{itemize}[leftmargin=*]
\vspace{-0.5mm}
\item Both the VAE and DMs implementations of GenAtt show significant advantages across different datasets and evaluation metrics,  demonstrating the superior performance of GenAtt in SR tasks. These results substantiate the claims made in our theoretical analysis and provide strong evidence for GenAtt's effectiveness. This directly answers \textbf{RQ1}, confirming the obvious advantages of generative attention approach. Moreover, the significant improvement over a range of Transformer architectures also partially addresses \textbf{RQ2}, showing that GenAtt provides a better solution for optimizing recommendation outcomes.
\item The GenAtt implementations consistently outperform all baselines across different application domains and varying degrees of sparsity. These findings answer \textbf{RQ4} by demonstrating that GenAtt can better express latent patterns, regardless of the dataset's sparsity. This leads to improved recommendation performance compared to traditional Transformer-based models. This also partially answers \textbf{RQ6}, as our generative attention outperforms existing models relying on the Query-Key-Value architecture, without requiring transformation matrices. 
\item Overall, V-GenAtt outperforms D-GenAtt, primarily because VAE models excel at learning smooth and continuous latent distributions, which are well-suited for generating adaptive attention. On the other hand, D-GenAtt can perform better with longer recommendation lists (larger $n$). This is due to the strength of DMs in progressively refining their attention distributions for better representation of long-term user preferences, which is beneficial when generating longer recommendation lists.
\item The improvements are noticeable on sparse datasets, such as on Beauty. This is largely due to GenAtt's ability to \underline{model stochasticity} through generative attention, which enhances its robustness in capturing relevant dependencies, even with limited interaction data. In contrast, traditional attention mechanisms often struggle with sparse data, leading to suboptimal recommendations. 
\end{itemize}

\Cref{fig:loss-weight} shows the impact of varying the loss weight $\gamma$ on the GenAtt models. The results demonstrate that as $\gamma$ increases, performance steadily improves, indicating that a higher emphasis on the generative process strengthens the model's ability to capture the stochastic nature of user preferences. This allows the model to better account for variability in user behavior, especially in scenarios where preferences are dynamic or uncertain. However, when $\gamma$ becomes too large, the generative loss dominates, causing the model to neglect key signals from the task-specific loss, leading to a drop in performance. This confirms that the generative process, while fulfilling the role of traditional transformations, does so in a more flexible and adaptive manner, capturing user preferences more effectively. This is relevant to \textbf{RQ2}, as it explains the role of the generative process, and further validates \textbf{RQ4} and \textbf{RQ6}, demonstrating that GenAtt achieves strong performance despite not relying on transformation matrices.
Moreover, we directly set loss weight $\gamma=1$ without fine-tuning, showing that GenAtt achieves superior recommendation performance without the need for extensive hyper-parameter optimization. These observations are also validated across other datasets.

\begin{figure}
\centering
     \subfigure[V-GenAtt]{
    \includegraphics[width=0.486\linewidth]{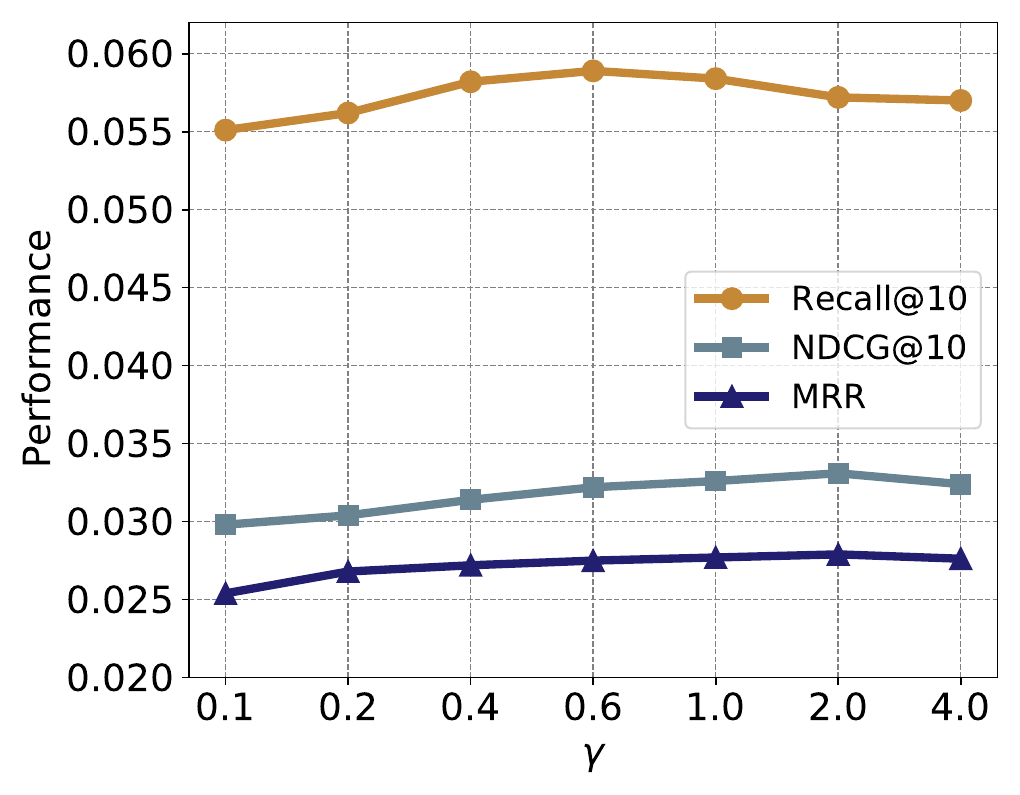}}
    \subfigure[D-GenAtt]{
    \includegraphics[width=0.486\linewidth]{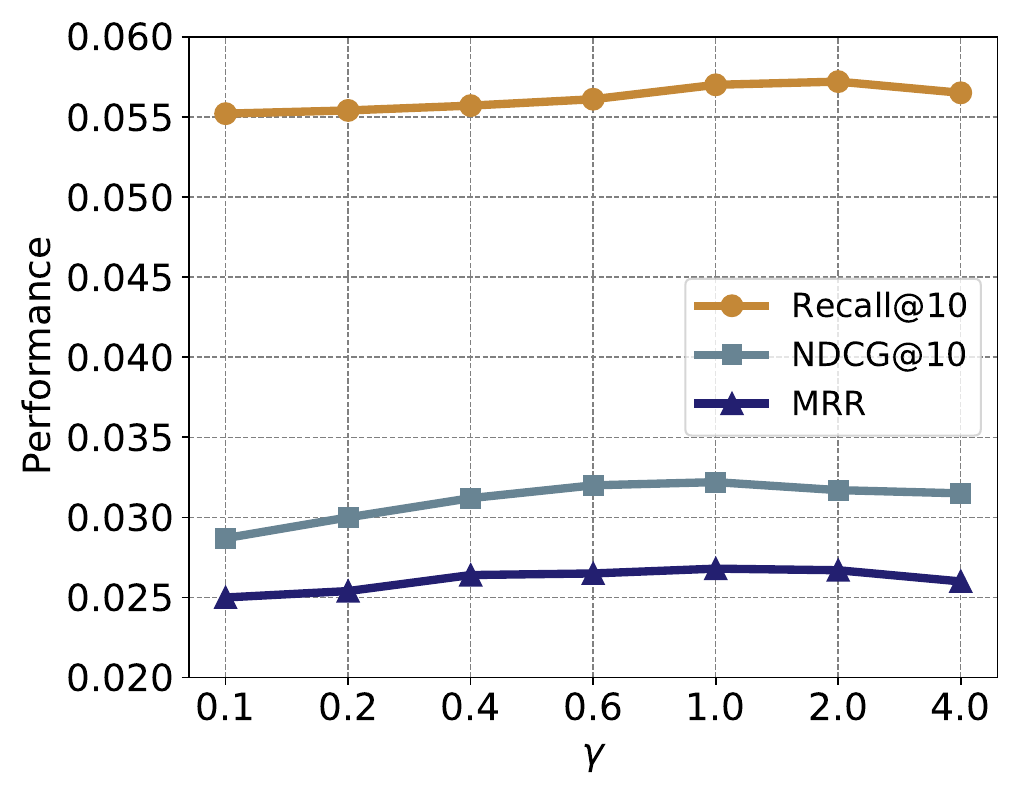}}
\vspace{-3.6mm}
\caption{Performance \textit{w.r.t.} loss weight $\gamma$ on CDs.}
\vspace{-3.5mm}
\label{fig:loss-weight}
\end{figure}

To better address \textbf{RQ6}, we conduct additional experiments by introducing transformations into GenAtt. Specifically, we modify the input to the sequence encoder in the generative model by using a transformed matrix (Query), or by multiplying the generative attention with a transformed Value matrix. The results indicate that while introducing the Query leads to slight improvements in certain cases, the inclusion of the transformed Value matrix actually degrades performance. This could be due to the interaction between the stochastic, non-linear nature of the attention mechanism and the static, transformed Value matrix, which potentially disrupts the learning dynamics.

In \Cref{fig:sequence-length}, we compare the performance of two implementations of GenAtt under varying maximum sequence length $n$. As shown, GenAtt consistently demonstrates a clear advantage across different sequence lengths, further addressing \textbf{RQ1} by highlighting the models' robustness and scalability. Additionally, when the data is sparse, as indicated by smaller values of $n$, GenAtt outperforms the classical model SASRec and the stochastic STOSA, with a more pronounced improvement. This provides further insight into \textbf{RQ3}, showing that a key characteristic of GenAtt is its ability to better handle sparse data and capture the underlying patterns even in the absence of extensive historical information.

\begin{figure}
\centering
     \subfigure[Beauty]{
    \includegraphics[width=0.45\linewidth]{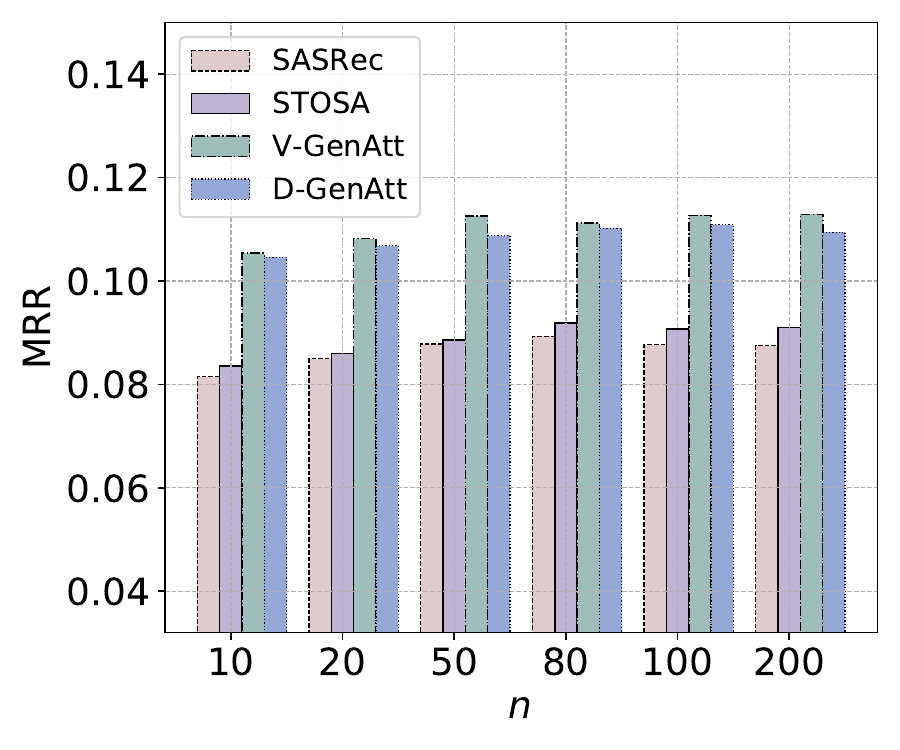}}
    \subfigure[ML-1M]{
    \includegraphics[width=0.45\linewidth]{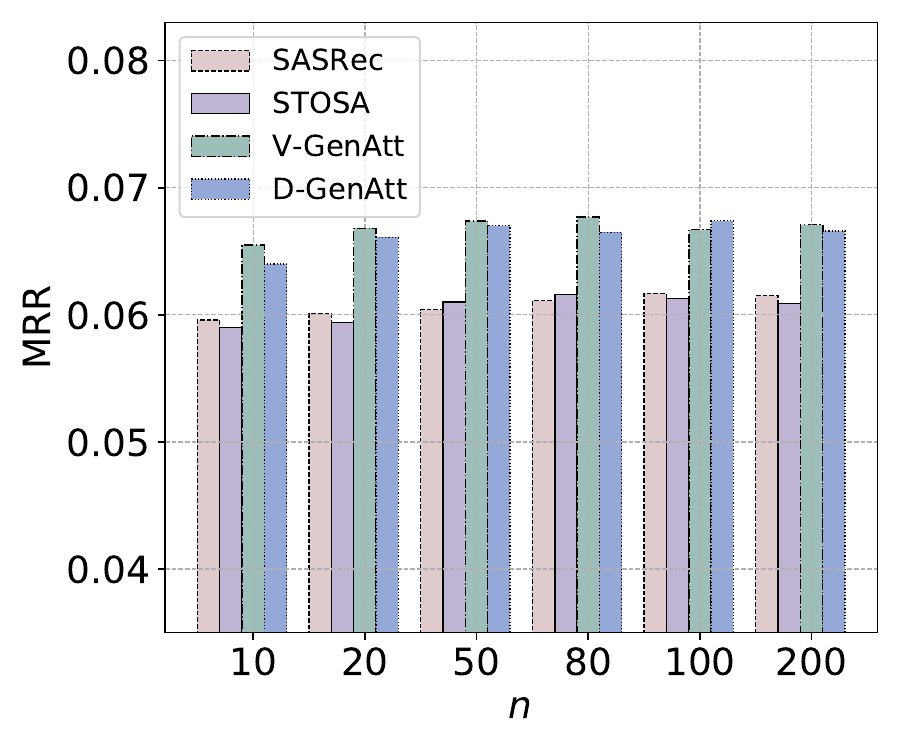}}
\vspace{-4mm}
\caption{Performance \textit{w.r.t.} sequence length $n$.}
\vspace{-4mm}
\label{fig:sequence-length}
\end{figure}

\begin{figure*}
\centering
     \subfigure[V-GenAtt]{
    \includegraphics[width=0.242\linewidth]{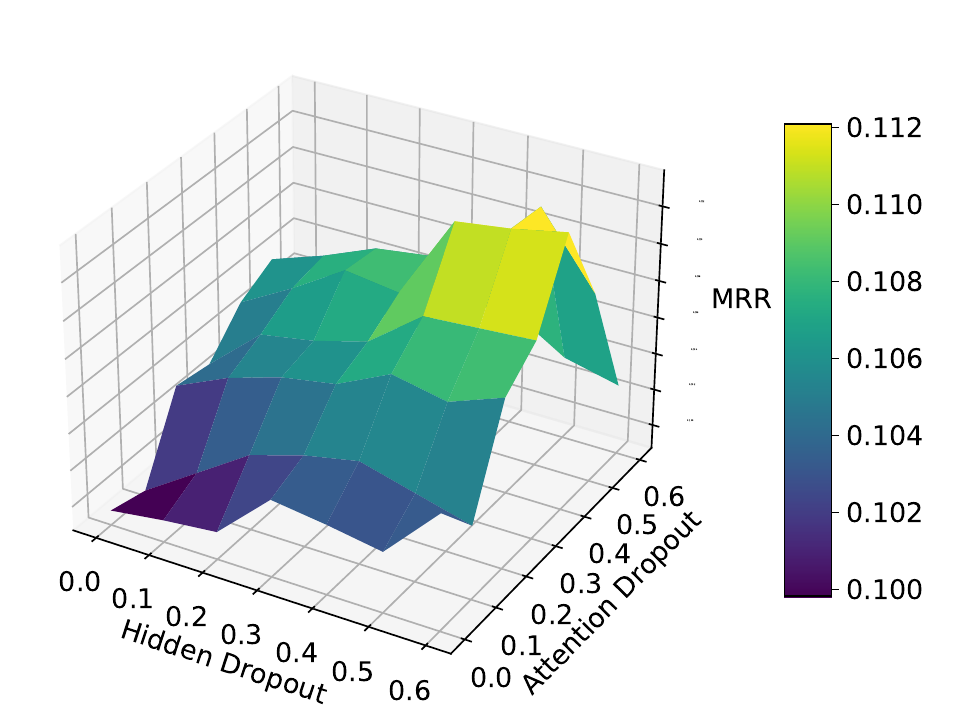}}
    \subfigure[D-GenAtt]{
    \includegraphics[width=0.242\linewidth]{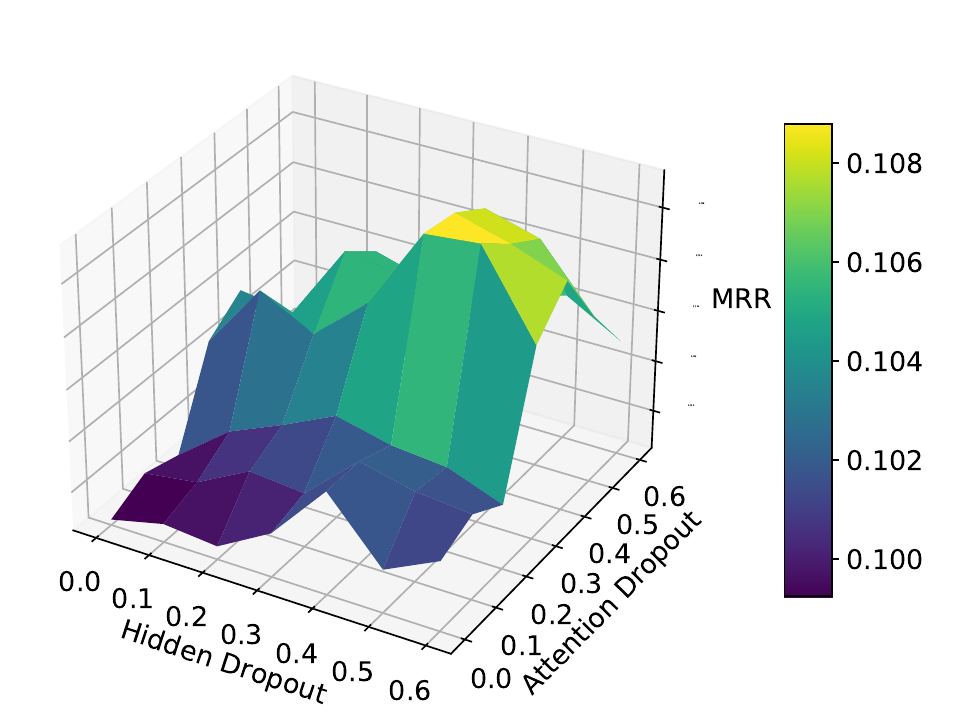}}
    \subfigure[SASRec]{
    \includegraphics[width=0.242\linewidth]{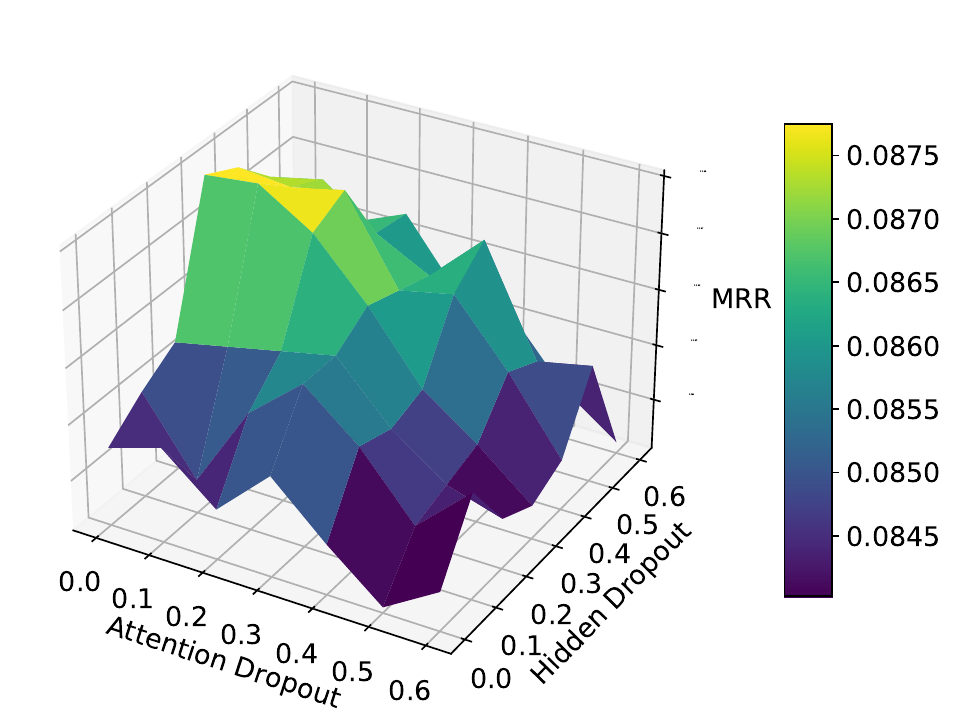}}
    \subfigure[STOSA]{
    \includegraphics[width=0.242\linewidth]{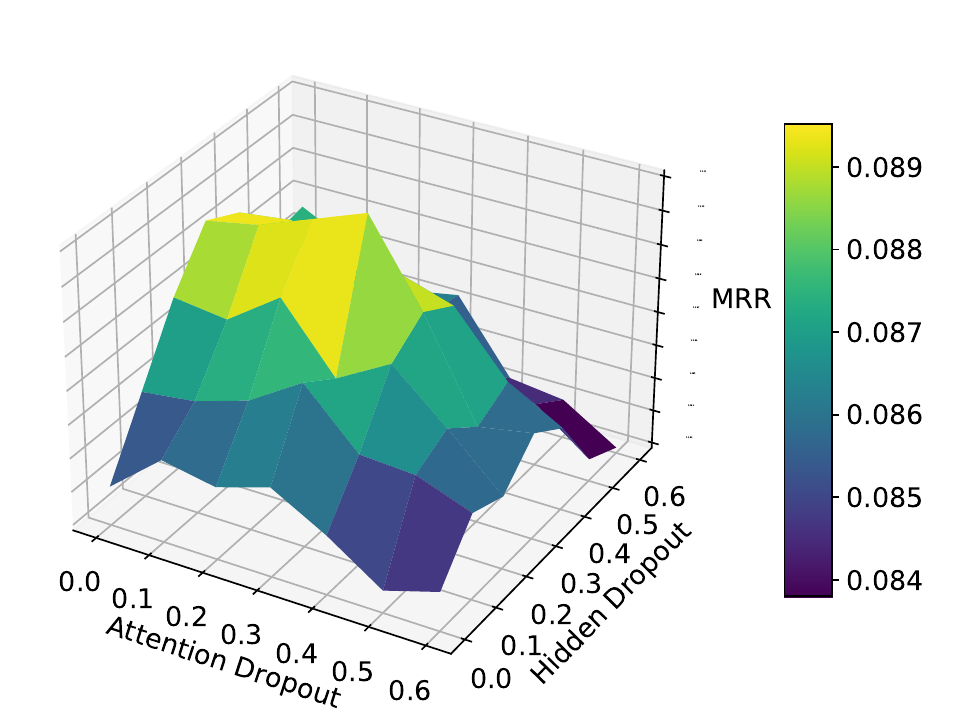}}
\vspace{-3mm}
\caption{Performance \textit{w.r.t.} dropout probabilities on Beauty.}
\vspace{-3.mm}
\label{fig:drop}
\end{figure*}

To further address \textbf{RQ3}, \Cref{fig:drop} compares GenAtt implementations with two representative self-attention-based SR models under different dropout settings. In Transformer-based SR models, dropout is usually applied to hidden representations and attention weights. As shown, GenAtt models perform better with higher dropout values, indicating that GenAtt prefers larger dropout rates. This is because, unlike traditional transformation models, the generative attention mechanism helps regularize the model by introducing more variability in the attention weights, which reduces overfitting and improves generalization. In our experiments, for different datasets, both hidden dropout and attention dropout are directly set to 0.4, which indicates that our model does not require rigorous tuning of hyper-parameter values.

We also validate other properties of generative attention. In SR models using DMs, Transformer is typically used as Approximator for the underlying distribution. To ensure computational efficiency, we employ a simple sequential container in our experiments, and as demonstrated in \Cref{tab:all-performance}, D-GenAtt achieves notable performance under this configuration. We also explore the use of Transformer-based Approximators, which shows some improvements, but for the sake of maintaining computational efficiency and simplicity, we stick with the simpler neural networks.

\begin{figure}
\centering
     \subfigure[Beauty]{
    \includegraphics[width=0.486\linewidth]{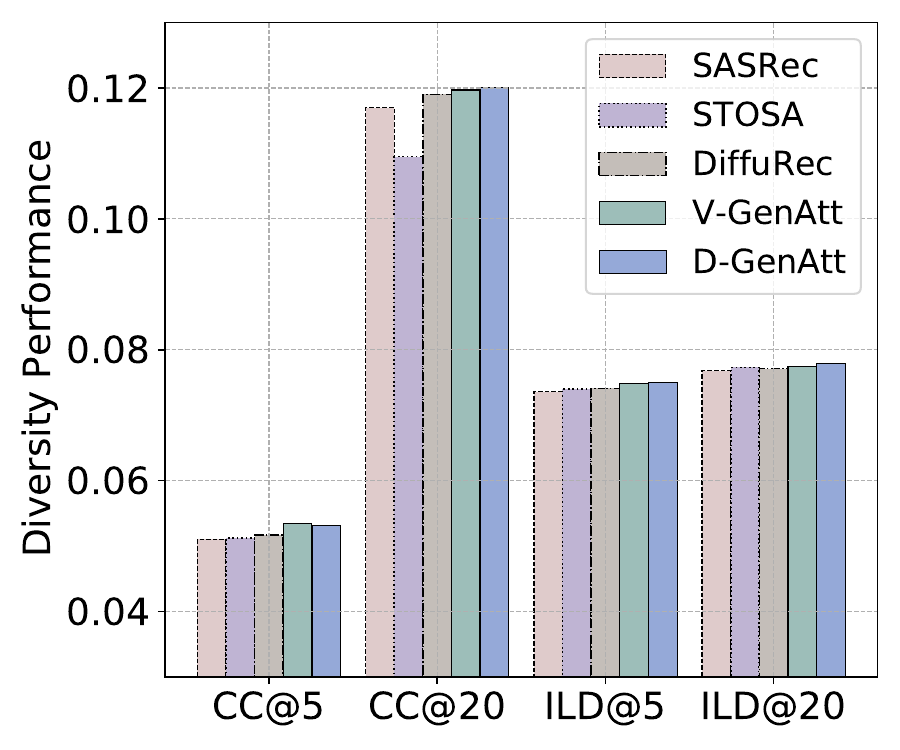}}
    \subfigure[Anime]{
    \includegraphics[width=0.486\linewidth]{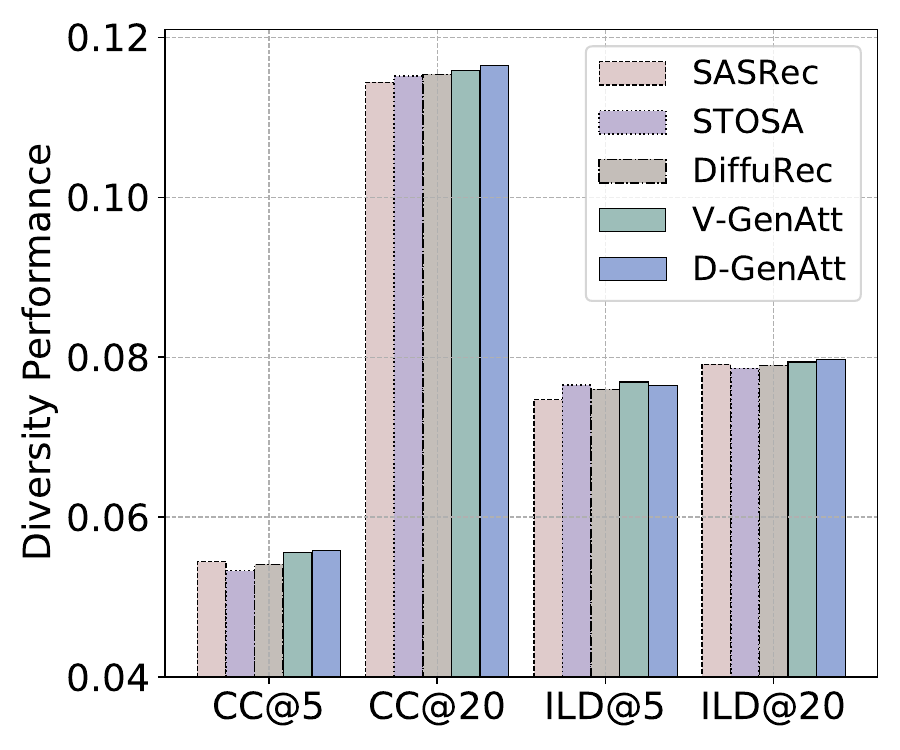}}
\vspace{-3.1mm}
\caption{Diversity performance comparison on two datasets.}
\vspace{-3.7mm}
\label{fig:diveristy}
\end{figure}

\begin{table}
\centering
\setlength{\tabcolsep}{3.4pt} 
\renewcommand{\arraystretch}{1.5} 
\fontsize{7.6}{7.8}\selectfont 
\caption{Average training time (in seconds for one epoch).}
\vspace{-2.7mm}
\begin{tabular}{c|ccccc|ccccc}
\hline
Dataset & \multicolumn{5}{c|}{Beauty} & \multicolumn{5}{c}{ML-1M} \\ \hline
 $n$ & 20 & 30 & 50 & 100 & 200 & 20 & 30 & 50 & 100 & 200 \\
\hline
SASRec &0.43  &0.49  &0.59  &0.74 &0.93  &1.86 &2.01  &2.18  &2.56  &3.52  \\
DiifuRec &0.47  &0.60  &0.73  &0.82  &1.08  &2.35  &2.46  &2.55  &2.89  &4.02  \\
V-GenAtt &0.40 &0.45  &0.60  &0.71  &0.90  &1.78  &1.95  &2.12  &2.58  &3.37  \\
D-GenAtt &0.45 &0.58  &0.75  &0.95  &1.48  &2.31  &2.73  &3.08  &3.46  &4.85 \\
\hline
\end{tabular}
\vspace{-3mm}
\label{tab:training_times}
\end{table}

\Cref{fig:diveristy} is presented to address \textbf{RQ5}, where we compare the diversity performance of three representative models (SASRec, STOSA, and DiffuRec) with our models using diversity metrics (CC and ILD) at their optimal MRR. The findings reveal that stochastic models (STOSA, DiffuRec, and GenAtt) generally outperform the deterministic SASRec in terms of diversity. Among these, GenAtt models outperform the other three models, as they directly use generated attention weights, introducing more stochasticity. Furthermore, D-GenAtt ( incorporates noise through diffusion) achieves superior diversity compared to V-GenAtt (does not explicitly integrate noise). 
These results are related to \textbf{RQ2}, as they demonstrate that GenAtt affects diversity. 

In addition, we explore the relationship between the noise injection level and diversity. In D-GenAtt, the noise intensity is influenced by parameters such as the time steps $T$, $\beta_s, and \beta_e$, which directly control the extent to which noise is introduced. By adjusting these parameters to expand the noise range, we observe an increase in diversity across different datasets, suggesting that greater noise injection facilitates a more varied exploration of the recommendation space. Moreover, we find that the variation in time steps also impacts relevance performance, as it determines how much the model allows user behavior to evolve over time during the diffusion process. Our results show that setting $T$ to 100 or 200 consistently yields better performance across multiple datasets. However, to ensure the generalizability, we choose a time step of 50, equal to $n$. For $\beta_s$ and $\beta_e$, we select the commonly used values of $1 e^{-4}$ and 0.02 , respectively. These results indicate that while tuning the diffusion parameters can enhance model performance, excessive hyperparameter tuning is unnecessary, as the common values are effective across different scenarios.

\Cref{tab:training_times} is used to analyze the training efficiency of GenAtt against representative models. It shows that our GenAtt models achieve higher training efficiency than the competitive DiffuRec model when the sequence length is within a normal range (\textbf{RQ7}). Overall, the training time of generative attention is comparable to existing Transformer-based models, making it a promising approach with notable accuracy and diversity. 

The classical self-attention-based SR model (SASRec) operates with three transformation matrices with the overall space complexity $O\left(|\mathcal{I}| d+n d+3 d^2\right)$. For VAE-based GenAtt, we can summarize the space complexity as: $O\left(|\mathcal{I}| d+n d+n d_h\right)$, where $d_h$ represents the dimensionality of hidden states.
The space complexity for D-GenAtt can be written as: $O\left(|\mathcal{I}| \cdot d+ n^d +T d_h\right)$.
Overall, generative attention models requires less space complexity compared to SASRec, as it avoids the need for multiple transformation matrices. Comparing time complexities reveals distinct computational costs. SASRec incurs $O\left(n^2 d+n d^2\right)$, reflecting both the quadratic complexity in the sequence length ( $n^2$ ) when computing attention scores and the cost of applying transformation matrices $\left(n d^2\right)$. By contrast, VAE-based GenAtt primarily involves encoding the input sequence in $O(n d)$ and generating the attention matrix in $O\left(n^2\right)$, leading to $O\left(n d+n^2\right)$ overall. Meanwhile, Diffusion Models-based generative attention introduces an additional factor $T$ (the number of time steps for the forward and reverse diffusion processes), resulting in $O\left(T \cdot n^2\right)$. As a result, while VAE-based attention can be more efficient than SASRec, diffusion-based attention often becomes slower in practice, particularly for larger $T$.

Combining the actual runtime results in \Cref{tab:training_times} with the complexity analyses provides a comprehensive answer to \textbf{RQ7}.

\vspace{-0.5mm}
\section{CONCLUSION}
\vspace{-0.3mm}

This work introduces a novel perspective on sequential recommendation through the lens of generative attention mechanisms. We have explored two implementations, \textit{i.e.}, V-GenAtt and D-GenAtt, based on VAE and diffusion models, respectively, offering theoretical insights and supporting our findings with comprehensive experimental evaluations. Our results demonstrate the potential of generative models in dynamically learning attention distributions, offering a more expressiveness and flexible alternative to traditional self-attention methods. This approach not only advances the state-of-the-art in recommender systems but also opens up avenues for applications in other domains, such as natural language processing \cite{galassi2020attention} and computer vision \cite{guo2022attention}, where attention mechanisms play a crucial role. Future work will focus on further optimizing these models for scalability, improving their efficiency for large-scale datasets, and exploring their integration with other advanced techniques like reinforcement learning and multi-modal systems.

\vspace{-0.5mm}
\begin{acks}
This work is supported by the Youth Scientific Research Fund of Qinghai University (Grant No.: 2024-QGY-6), and the Quan Cheng Laboratory (Grant No.: QCLZD202301). Weizhi Ma is also sponsored by Beijing Nova Program.
\end{acks}

\bibliographystyle{ACM-Reference-Format}
\bibliography{sample-base}


\end{document}